\documentclass{bioinfo}
\pdfoutput=1
\usepackage{tikz}
\usepackage{amsthm}
\usepackage{pgfplots}
\usepackage{caption}
\usepackage{subcaption}
\usepackage{amsmath}
\usepackage{amssymb}
\usepackage{graphicx}
\usepackage{algorithm}
\usepackage{setspace}
\usepackage{algpseudocode}
\usepackage{soul}
\usepackage{xspace}
\usepackage{comment}
\usepackage{cleveref}
\usepackage{epstopdf}
\copyrightyear{2015} \pubyear{2015}
\access{Advance Access Publication Date: Day Month Year}
\appnotes{Manuscript Category}

\usetikzlibrary{snakes,arrows,shapes}
\DeclareMathOperator{\E}{\mathbb{E}}

\newcommand{\kmer}{$k$-mer\xspace}
\newcommand{\kmers}{$k$-mers\xspace}
\newcommand{\kpomer}{$(k+1)$-mer\xspace}
\newcommand{\kpomers}{$(k+1)$-mers\xspace}
\newcommand{\algname}{{\sc TwoPaCo}\xspace}

\newcommand{\ecoli}{{\it E.coli}\xspace}
\newcommand{\bwtbased}{{\tt bwt-based}\xspace}
\newtheorem{obs}{Observation}

\tikzstyle{every picture}+=[font=\sffamily]
\begin{document}
\raggedbottom
\algnotext{EndFor}
\algnotext{EndIf}
\algnotext{EndWhile}
\renewcommand{\algorithmicrequire}{\textbf{Input: }}
\renewcommand{\algorithmicensure}{\textbf{Output: }}

\firstpage{1}

\subtitle{Subject Section}

\title[TwoPaCo]{TwoPaCo: An efficient  algorithm to build the compacted de Bruijn graph from many complete genomes}
\author[Minkin \textit{et~al}.]{Ilia Minkin$^{1}$, Son Pham$^{2}$, Paul Medvedev$^{1,3,4,*}$}
\address{
$^1$Department of Computer Science and Engineering, The Pennsylvania State University, USA \\ 
$^2$Salk Institute for Biological Studies, USA \\
$^3$Department of Biochemistry and Molecular Biology, The Pennsylvania State University, USA\\
$^4$Genomic Sciences Institute of the Huck,  The Pennsylvania State University, USA
}

\corresp{$^\ast$To whom correspondence should be addressed.}

\history{Received on XXXXX; revised on XXXXX; accepted on XXXXX}

\editor{Associate Editor: XXXXXXX}

\abstract{\textbf{Motivation:}	
De Bruijn graphs have been proposed as a data structure to facilitate the analysis of related whole genome sequences,
in both a population and comparative genomic settings.
However, current approaches do not scale well to many genomes of large size (such as mammalian genomes).\\
\textbf{Results:}	In this paper, we present \algname, a simple and scalable low memory algorithm
	for the direct construction of the compacted de Bruijn graph from a set of complete genomes.
	We demonstrate that
	it can construct the graph for 100 simulated human genomes in less then a day and 
	eight real primates in less than two hours, on a typical shared-memory machine.
	We believe that this progress will enable novel biological analyses of hundreds of mammalian-sized genomes. \\ 
\textbf{Availability:}	Our code and data is available for download from {\tt github.com/medvedevgroup/TwoPaCo} \\
\textbf{Contact:} \href{ium125@psu.edu}{ium125@psu.edu} 
}

\maketitle

\section{Introduction}

The study of related features across different genomes is fundamental to many areas of biology,
both for genomes of the same species (pan-genome analysis) and for genomes across different species (comparative genomics).
The starting point in these studies is some representation of the relationship between genomes, often as a multiple alignment~\citep{gusfieldbook} or as a graph representation~\citep{lee02multiple}.
With the ubiquity of cheap sequencing,
the number of genome sequences available for these studies has expanded tremendously~\citep{genome10k,avian,genome10kb}.
The type of genomes available has also expanded: we have whole genomes, as opposed to only genic sequences,
and we now have many mammalian sized ($\sim3$ Gbp) genomes.
In addition, novel long-read sequencing technologies like Oxford Nanopore promise to make such genomes even easier to obtain.
Thus, we expect to have hundreds of whole mammalian genome sequences for comparison, in both the population and comparative genomic settings.
However, our current computational ability to analyze such large datasets is, at best, limited.

A major bottleneck toward the goal of comparing hundreds of whole mammalian genomes are 
scalability issues due to the problem of repeats.
Multiple alignment is a computationally hard problem due to the presence
of high copy-count repeats, which are absent in many lower-order species but 
cover roughly half of a mammalian genome.
For example, the human genome contains over a million ALU repeats.
Most multiple alignment methods mask repeats due to the computational challenge of handling them,
resulting in a loss of important features.
Without masking repeats, most approaches do not scale well to modern data, 
both in terms of computation time and memory usage.
A competition of whole-genome aligners demonstrated 
that some recent  tools are able to handle larger data sets;
however, these were still limited to $\leq20$ genomes of length $<200$ Mbp~\citep{alignathon}.

As an alternative to multiple alignment,
de Bruijn graph 
approaches for comparing whole genome sequences have been proposed~\citep{raphael2004novel,pham2010drimm,minkin2013sibelia}.
De Bruijn graphs have traditionally been used for {\em de novo} assembly~\citep{miller_assembly_2010,schatz2010assembly}, but in the case of already assembled genomes, they are built from a few long sequences, as opposed to billions of short reads.
In the setting of population genomics, a de Bruijn graph representation 
of closely related genomes can be used to discover polymorphism in a population~\citep{iqbal2012novo,dilthey15}.
The use of graphs brings up a host of other important problems that have been studied:
which data structure to use~\citep{pancake,dilthey15},
how to design efficient querying indices~\citep{siren14,holley2015bloom},
and how to do align read data to such graphs~\citep{huang13,paten14mapping}.
In the comparative genomics setting, a de Bruijn graph representation can be used to detect synteny blocks across different species~\citep{pham2010drimm,minkin2013sibelia}.

The construction of a de Bruijn graph is one of the most resource intensive steps of many of these algorithms and 
thus poses the major scalability bottleneck.
Recent papers have demonstrated how to efficiently construct the graph 
in the whole genome sequence setting~\citep{minkin2013sibelia,marcus2014splitmem,cazaux2014indexing,beller2015efficient,baier15}.
The fastest algorithm to date
was able to process seven whole mammalian genomes in under eight hours~\citep{baier15}.
However, constructing the graph is still prohibitive for larger inputs.

In this paper, we present \algname, a novel algorithm for constructing de Bruijn graphs from whole genome sequences.
We demonstrate that
it can construct the graph for 100 human genomes in less then a day and 
eight primates in less than two hours, on a typical shared-memory machine.
\algname is based on the following key insight.
We start with a basic naive algorithm, which has a prohibitively large memory usage but 
has the benefit that it is easily parallelizable.
We then create a two pass algorithm that uses the naive one as a subroutine. 
In the first pass, we use a probabilistic data structure to drastically reduce the
size of the problem, and in the second pass, 
we run the naive algorithm on the reduced problem.
One of our key design principles was to make the algorithm simple and embarrassingly parallelizable,
in order to take advantage of multi-thread support of most shared-memory servers.
The result is a simple and scalable low memory algorithm for the direct construction of the compacted de Bruijn graph for a set of complete genomes.

\section{Preliminaries}

For a string $x$, we denote by $x[i .. j]$ the substring from positions $i$ to $j$, inclusive of the endpoints.
We say that a string $x$ is the {\em prefix} of a string $y$, if $x$ constitutes the first $|x|$ characters of $y$, where $|x|$ is the length of $x$.
A string $x$ is the {\em suffix} of a string $y$, if $x$ constitutes the last $|x|$ characters of $y$.
At first, we define the {\em de Bruijn graph} built from a single string.
For a string $s$ and an integer $k$, we designate the de Bruijn graph as $G(s, k)$.
Its vertex set consists of all substrings of $s$ of length $k$, called {\em $k$-mers}.
Two vertices $u$ and $v$ are connected with a directed edge $u \rightarrow v$ if $s$ contains a substring $e$, $|e| = k + 1$ such that $u$ is the prefix of $e$ and $v$ is the suffix of $e$.
We will use terms ``$k$-mer" and ``vertex" interchangeably, as well as ``$(k+1)$-mer" and ``edge."
For clarity of presentation, we have defined the de Bruijn graph as a simple graph, but we in fact store it as a multi-graph.

Now we define the de Bruijn graph for multiple strings.
The union of two graphs $G_1 = (V_1, E_1)$ and $G_2 = (V_2, E_2)$ is the graph $G_1 \cup G_2 = (V_1 \cup V_2, E_1 \cup E_2)$.
For a collection of strings $S = \{s_1, s_2, \ldots, s_n\}$ and an integer $k$, 
the de Bruijn graph is the union of the graphs constructed from individual strings, i.e. $G(S, k) = G(s_1, k) \cup G(s_2, k) \cup \ldots \cup G(s_n, k)$.
Fig. \ref{ordinary_de_Bruijn} shows an example of a graph built from two strings.
Recall that a {\em path} through a graph is a sequence of adjacent vertices where the only repeated vertices may be the first and last one, whereas a {\em walk} can repeat both vertices and edges.
We say that a walk or path $p$ in the de Bruijn graph $G(S, k)$ \textit{spells} a string $t$ if $G(t, k) = p$.
We say that a vertex $v$ is a {\em bifurcation} if at least one of the following holds (1) $v$ has more than one incoming edge (2) $v$  has more than one outgoing edge.
A vertex $v$ is a {\em sentinel} if it is a first or last $k$-mer of an input string.
We call a vertex a {\em junction} if it is a bifurcation, or a sentinel, or both.
The set $J(s, k)$ is the set of positions $i$ of the string $s$ such that the \kmer $s[i .. i+k-1]$ is a junction.
For a collection of strings $S$ the set $J(S, k)$ is defined analogously.

\begin{figure}[H]
	\centering
	\begin{subfigure}{.49\textwidth}
		\includegraphics[scale=.4]{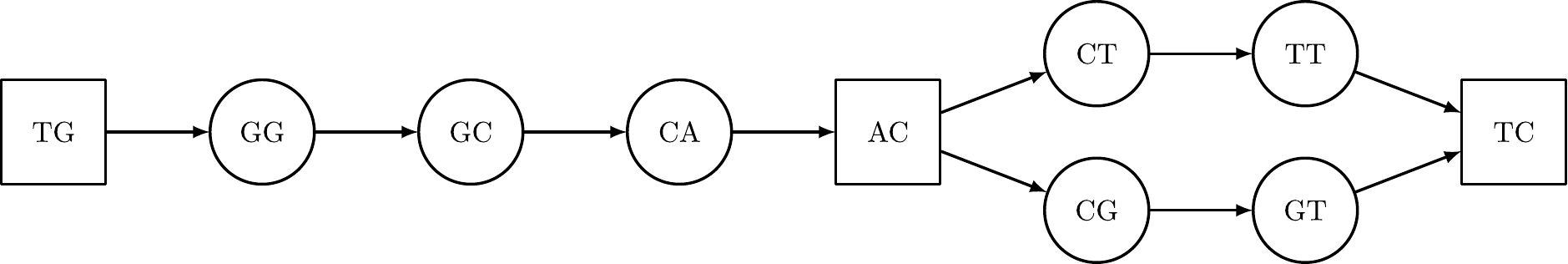}
		\caption{}
		\label{ordinary_de_Bruijn}
	\end{subfigure}
	\hfill

	\begin{subfigure}{.23\textwidth}
		\includegraphics[scale=.5]{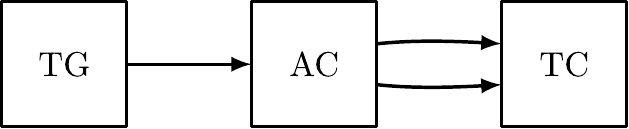}
		\caption{}
		\label{compacted_de_Bruijn}
	\end{subfigure}
	\hfill
	\begin{subfigure}{.23\textwidth}
		\includegraphics[scale=1.5]{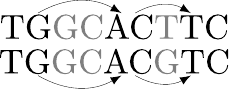}
		\caption{}
		\label{example_edge_set_construction}
	\end{subfigure}
	\hfill

	\begin{subfigure}{.37\textwidth}
		\includegraphics[scale=.5]{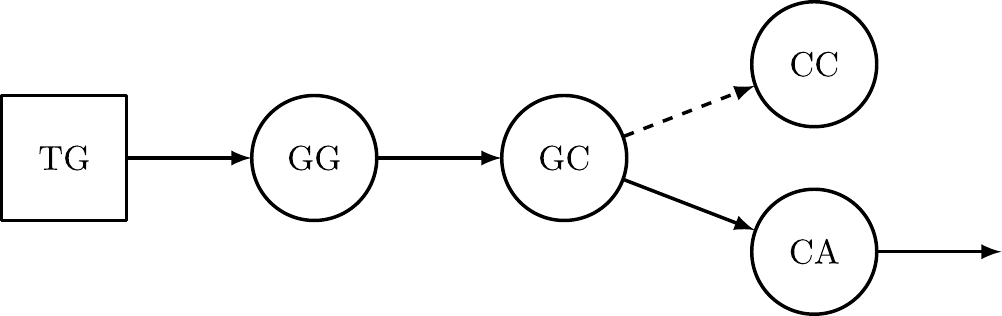}
		\caption{}
		\label{bloom_filter_false_edges}
	\end{subfigure}
	\caption{The de Bruijn graph and its compacted version. 
		(\subref{ordinary_de_Bruijn}) An example of an ordinary de Bruijn graph built from the genomes $S = \{``TGGCACGTC", ``TGGCACTTC"\}$ and $k = 2$.
		Junctions are indicated by square vertices.
		(\subref{compacted_de_Bruijn}) Graph obtained after compaction.
(\subref{example_edge_set_construction}) The two genomes that generate the graph, with the junction $k$-mers in bold; 
the arrows between them indicate edges in the compacted graph and non-branching paths in the ordinary graph.
The strings b	etween them label the edges in the compacted graph.
(\subref{bloom_filter_false_edges}) If we store edges in a Bloom filter, we may observe false edges 
(dotted line) in the ordinary graph; this can lead to detection of false junctions, like the vertex ``GC'' in this case.}
\end{figure}

A de Bruijn graph can be compacted by collapsing non-branching paths into single edges.
More precisely a {\em non-branching path} in an ordinary de Bruijn graph is a path $u \rightsquigarrow v$ such that the only junction vertices on this path are possibly $u$ or $v$.
The {\em compaction} of a non-branching path $p = u \rightsquigarrow v$ is removal of edges of $p$ and replacing it with an edge $u \rightarrow v$.
A {\em maximal} non-branching path is a non-branching path that cannot be extended by adding an edge.
The {\em compacted graph} $G_c(S, k)$ is the graph obtained from $G(S, k)$ by compaction of all its maximal non-branching paths.
This graph is sometimes referred to as the compressed graph in the literature~\citep{beller2015efficient}.
It is easy to see that the vertex set of $G_c(S, k)$ is the set of junctions of the graph $G(S, k)$ and two vertices $u$ and $v$ of $G_c(S, k)$ are connected if there is a non-branching path $u \rightsquigarrow v$  in $G(S, k)$.
Fig. \ref{compacted_de_Bruijn} shows an example of a compacted de Bruijn graph.
Note that a compacted graph is a multi-graph: after compaction a pair of vertices can be connected by edges going in the same direction that corresponded to different paths in the ordinary graph.

Graph compaction is the first step of most algorithms working with de Bruijn graphs, 
since it drastically reduces the number of vertices.
It can be obtained from the ordinary graph in linear time by a simple graph traversal. 
However, building and storing the ordinary graph takes lots of space, which 
we seek to avoid in our algorithm by constructing the compacted graph directly.

A {\em Bloom filter} is a space efficient data structure for representing sets that supports two operations: 
storing an element in the set and checking if an element is in the set~\citep{bloom1970space}.
A Bloom filter offers improvements in space usage but can generate false positives during membership queries. 
Bloom filters have previously been successfully applied 
to assembly~\citep{bfcounter,chikhi2013space,cascading,bless} and 
to indexing and compression of whole genomes~\citep{holley2015bloom}.

\section{Reduction to the Problem of Finding Junction Positions}\label{sec:reduction}
\algname is based on the observation that there is a bijection between maximal non-branching paths of the de Bruijn graph and substrings of the input 
whose junctions are exactly the two flanking \kmers (Observation~\ref{lem:junction} below). 
This observation reduces the problem of graph compaction to finding the set of junction positions $J(S, k)$, 
as follows.
The vertex set of the compacted graph is the set of all \kmers located at positions $J(S,k)$.
To construct the edges, we need to find substrings flanked by junctions.
To do this, we can traverse positions of $J(S,k)$ in the order they appear in the input.
For every two consecutive junction positions $i$ and $j$, we record an edge between the \kmer at $i$ and the \kmer at $j$.
\Cref{example_edge_set_construction} shows an example of how sequences of junctions generate non-branching paths in the ordinary graph and edges in the compacted one.

The observation follows in a straight-forward way from the definitions, 
but we state and prove it here for completeness.
\begin{obs}\label{lem:junction}
	Let $s$ be an input string and $P$ be the set of maximal non-branching paths of the graph $G(s, k)$.
	Let $T$ be the set of substrings of $s$ such that each $t \in T$ starts and ends with a  junction of $G(s, k)$ and does not contain junctions in between.
	Then there exists a bijective function $g : T \rightarrow P$.
\end{obs}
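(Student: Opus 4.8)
The plan is to construct an explicit bijection and then verify the three standard properties (well-definedness, injectivity, surjectivity). Given $t \in T$, I would define $g(t) = G(t, k)$, i.e.\ the non-branching path spelled out by reading $t$ one \kmer at a time, $t[0..k-1] \to t[1..k] \to \cdots \to t[|t|-k..|t|-1]$. First I would check that $g$ actually lands in $P$. By the definition of $T$, the two endpoint \kmers of this path are junctions while every intermediate \kmer is a non-junction, so the path is non-branching. It is moreover maximal: extending it by one edge at either end would turn a junction endpoint into an internal vertex of the extended path, violating the non-branching condition. Hence $g(t) \in P$.

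Injectivity is immediate once one observes that a path determines the string it spells: reading off the first vertex followed by the last character of each successive vertex recovers $t$, so $g(t_1) = g(t_2)$ forces $t_1 = t_2$. The step I expect to be the main obstacle is surjectivity, namely showing that an arbitrary maximal non-branching path $p = v_0 \to \cdots \to v_m \in P$ really is spelled by a \emph{contiguous} substring of $s$. The difficulty is that knowing each individual edge $v_i \to v_{i+1}$ occurs somewhere in $s$ does not, a priori, force all of them to line up at consecutive positions. I would resolve this with a unique-continuation argument powered by the non-branching property. Each internal vertex $v_i$ ($1 \le i \le m-1$) is a non-junction, hence neither a sentinel nor a bifurcation; therefore it is neither the first nor the last \kmer of $s$ and has exactly one incoming and one outgoing edge. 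Consequently every occurrence of $v_i$ in $s$ is both preceded and followed by a \kmer, and these are forced to be $v_{i-1}$ and $v_{i+1}$. I would then fix an occurrence of the edge $v_0 \to v_1$ (one exists because the edge lies in $G(s,k)$), say starting at position $q$, and propagate forward by induction on $i$: since each of $v_1, \ldots, v_{m-1}$ has a unique successor, one gets $s[q+i .. q+i+k-1] = v_i$ for all $i$. Thus $t = s[q .. q+m+k-1]$ is a substring of $s$ whose flanking \kmers are the junctions $v_0, v_m$ and whose interior \kmers are non-junctions, so $t \in T$ and $g(t) = p$.

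The configuration I would handle most carefully is the degenerate one hidden in the word \emph{path}: isolated non-branching cycles, where $v_0 = v_m$ and no vertex is a junction, and trivial single-vertex paths. In the cyclic case the forward propagation above never encounters a junction that halts it, so I would treat such cycles separately, matching each isolated non-branching cycle to the corresponding wrap-around substring and re-checking that well-definedness and injectivity still hold there. Away from these degenerate cases, the intermediate vertices are distinct, $g(t)$ is a genuine simple path, and the three verifications above combine to show that $g$ is a bijection, as claimed.
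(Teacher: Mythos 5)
Your main line coincides with the paper's proof: you use the same map (a substring goes to the sequence of its constituent $k$-mers), the same injectivity argument (a path determines the string it spells), and your unique-continuation induction for surjectivity is just a more explicit rendering of the paper's remark that, because internal vertices of $p$ have in- and out-degree one, the walk spelled by $s$ must contain $p$ as a contiguous subwalk. Your additional well-definedness check (that $g(t)$ really is maximal and non-branching) is care the paper skips, and it is correct.

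The one place your plan goes wrong is the degenerate case of junction-free cycles. Your proposed remedy---matching such a cycle to a ``wrap-around substring''---cannot be made to work: a wrap-around string is not a substring of $s$ and, more importantly, is not an element of $T$, since every element of $T$ must be flanked by junctions. So if such a cycle existed, the stated bijection $g : T \rightarrow P$ would simply be false, and no bookkeeping would rescue it. The correct resolution is to show these cycles cannot occur in $G(s,k)$. Every vertex lies on the walk spelled by $s$, and the first vertex of that walk is a sentinel, hence a junction. Now take a hypothetical cycle all of whose vertices are non-junctions, and let $v$ be the vertex of that cycle visited earliest by the walk: $v$ is not a sentinel, so it is not the walk's starting vertex, so the walk enters $v$ along an edge; but $v$ has in-degree one, and its unique incoming edge comes from its predecessor on the cycle, which would then have been visited even earlier---a contradiction. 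This fact is also exactly what is needed to justify a step you assert without proof (and which the paper likewise attributes, without detail, to maximality): that the two endpoints of a maximal non-branching path are junctions, which you invoke when concluding $t \in T$. Once it is in place, your degenerate cases are vacuous (except the trivial isolated-vertex case $|s|=k$, which is immediate), and the rest of your argument is complete.
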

\begin{proof}
	Let $g$ be the function mapping substrings of $s$ to walks in $G(s,k)$,
	where $g$ maps a substring  to the vertices corresponding to its constituent \kmers.
	To prove that $g$ is a bijection when restricted to $T$, 
	we have to show that it is both an injection and surjection.
	Note that $g$ is injective by construction, that is, 
	any walk is spelled by a unique string.
	To prove that it is surjective, we need to show that for any maximal non-branching path 
	$p = u \rightsquigarrow v$, there is a $t\in T$ such that $g(t) = p$. 
	That is, $p$ is spelled by a string in $T$.
	Since the walk $g(s)$ must traverse all vertices in the graph, and the internal
	vertices of $p$ have in- and out- degrees equal to one, 
	the walk $g(s)$ must contain $p$ as a subwalk.
	Hence, the string $t$ spelled by $p$ must be a substring of $s$, i.e. $g(t) = p$.
	The internal \kmers of $t$ are non-junctions because $p$ is non-branching,
	and the first and last \kmers of $t$ are junctions because $p$ is maximal.
	Hence, $t \in T$.
\qed
\end{proof}
Generalization of the observation to the case of multiple strings is straightforward.

\begin{methods}

\section{Single Round Algorithm}

\begin{algorithm}[h]
\caption{\textit{Filter-Junctions}} 
\label{vertex_set_algorithm}
\algorithmicrequire{strings $S = \{s_1, \ldots, s_n\}$, integer $k$, 
and an empty set data structure $E$. 
A candidate set of marked junction positions $C \supseteq J(S, k)$ is also given.
When the algorithm is run naively, all the positions would be marked. 
}\\
\algorithmicensure{a reduced candidate set of junction positions.}
\begin{algorithmic}[1]
\For{$s \in S$} \label{line:edge:start}
	\For{$1\leq i < |s| - k$} 
		\If{$C[s,i] = $ marked}  \Comment Insert the two \kpomers containing the \kmer at $i$ into $E$.
			\State Insert $s[i ..  i + k]$ into $E$.
			\State Insert $s[i-1 .. i - 1 + k]$ into $E$.
		\EndIf
	\EndFor
\EndFor \label{line:edge:end}
\For{$s \in S$}	\label{line:check:start}
	\For{$1\leq i < |s| - k$} 
		\If{$C[s,i] = $ marked \textbf{and} $s[i ..  i + k - 1]$ is not a sentinel}
			\State $in \gets 0$ \label{line:count:start}		\Comment Number of entering edges
			\State $out \gets 0$ 		\Comment Number of leaving edges
			\For{$c \in \{A, C, G, T\}$}	\Comment Consider possible edges and count how many of them exist
				\If{$v \cdot c \in E$} \label{line:check1}	\Comment The symbol $\cdot$ depicts string concatenation
					\State $out \gets out + 1$		
				\EndIf
				\If{$c \cdot v \in E$} \label{line:check2}
					\State $in \gets in + 1$
				\EndIf
				\EndFor \label{line:count:end}
			\If{$in = 1$ \text{\textbf{and}} $out = 1$} \Comment If the \kmer at $i$ is not a junction.
			\State $C[s,i] \gets$ Unmarked \label{line:unmark}
			\EndIf
		\EndIf
	\EndFor
\EndFor \label{line:check:end}
\State \Return $C$
\end{algorithmic}
\end{algorithm}
In the previous section, we reduced the problem of constructing a compacted de Bruijn graph to that of finding the locations in the genome where junction vertices are located.
We will now present our algorithm for finding junction positions, in increasing layers of complexity.
First, we will describe \Cref{vertex_set_algorithm}, which can already be used as a naive algorithm to identify the junctions.
However, \Cref{vertex_set_algorithm} alone has a prohibitively large memory footprint.
To address this, we will present \Cref{alg:twopass}, which uses \Cref{vertex_set_algorithm} as a subroutine 
but reduces the memory requirements.
In cases of very large inputs, even \Cref{alg:twopass} can exceed the available memory. 
In \Cref{sec:split}, we finally present \Cref{construct_algorithm}, which addresses this limitation.
It limits memory usage, at the expense of time, by calling
\Cref{alg:twopass} over several rounds.
We refer to this final algorithm (\Cref{construct_algorithm}) as \algname.

In \Cref{vertex_set_algorithm}, we start with a candidate set $C$ of junction positions in the genomes.
A set of positions $C$ is called a {\em candidate set} if $C \supseteq J(S,k)$ and any two positions that 
start with the same \kmer can either both present or both absent from $C$.
$C$ is represented using boolean flags which mark every position of the genomes which is present in the set.
If \Cref{vertex_set_algorithm} is used naively, it would be called with every position marked; 
in general, however, 
we can use $C$ to capture the fact that the unmarked positions have been previously eliminated from consideration as junctions.

First, we store all edges of the ordinary de Bruijn graph in a set $E$.
We do this by a linear scan and for every \kpomer at position $i$, 
if either of the \kmers at positions $i$ or $i+1$ are marked, we insert the \kpomer into the set $E$
(\Crefrange{line:edge:start}{line:edge:end}).
Second, we again scan through the genomes and consider the \kmer $v$ at every marked position.
We use $E$ to check how many edges in $G(S,k)$ 
enter and leave $v$ (\Crefrange{line:count:start}{line:count:end}).
Since the DNA alphabet is finite, we can do this by merely considering all eight possible \kpomers -- 
four entering, and four leaving -- and checking whether they are in $E$.
If the in- and out-degrees do not satisfy the definition of a junction, we unmark position $i$; otherwise, we leave it marked.

\Cref{vertex_set_algorithm} can be used naively to find all junction positions,
by initially marking every position as a potential junction.
Storing the set $E$ in memory, however, is infeasible for large datasets.
To reduce the space requirements, we develop the two pass~\Cref{alg:twopass}.
In the first pass, we run~\Cref{vertex_set_algorithm}, 
but use a Bloom filter to store the set $E$ instead of a hash table.
A Bloom filter takes significantly less space than a hash table; however, the downside is that it can generate false positives during membership queries.
That is, when we check if a \kpomer is present in $E$ (\Cref{line:check1,line:check2} in \Cref{vertex_set_algorithm})
we may receive an answer that it is present, when it is in reality absent.
The effect is that the calculated in- and out-degrees may be inflated and we may leave non-junctions marked (\Cref{line:unmark}), see Fig. \ref{bloom_filter_false_edges}.
Nevertheless, the marked positions still represent a candidate set of junctions, since a junction will never be unmarked.
Thus, running \Cref{vertex_set_algorithm} with the Bloom filter reduces memory 
but does not always unmark non-junction positions.
In order to eliminate these marks, 
we run~\Cref{vertex_set_algorithm} again, using the positions marked in the first pass as a starting point, 
but this time using a hash table to store $E$ (\Cref{line:secondpass} in \Cref{alg:twopass}).
This second pass will unmark all remaining marked non-junction positions. 
Since the set of candidate marks has been significantly reduced after the first pass, 
the memory use of the hash table is no longer prohibitive.
As with \Cref{vertex_set_algorithm}, \Cref{alg:twopass} can be used to find all junction positions 
by initially marking every position as a potential junction.

\begin{algorithm}[h]
\caption{\textit{Filter-Junctions-Two-Pass}} 
\label{alg:twopass}
\algorithmicrequire{strings $S = \{s_1, \ldots, s_n\}$, integer $k$, 
a candidate set of junction positions $C_\text{in}$,
integer $b$} \\ 
\algorithmicensure{a candidate set of junction positions $C_\text{out}$
}
\begin{algorithmic}[1]
		\State $F \gets $ an empty Bloom filter of size $b$
		\State $C_\text{temp} \gets \textit{Filter-Junctions}(S, k, F, C_\text{in})$ 	\label{line:firstpass}		\Comment The first pass
		\State $H \gets $ an empty hash table
		\State $C_\text{out} \gets \textit{Filter-Junctions}(S, k, H, C_\text{temp})$	\label{line:secondpass}		\Comment The second pass
		\State \Return $C_\text{out}$
\end{algorithmic}
\end{algorithm}

Our implemented algorithms also handle the reverse complementarity of DNA, 
using standard techniques.
We summarize this briefly for the sake of completeness.
For a string $s$, let $\bar{s}$ be its reverse complement,
and define the comprehensive de Bruijn Graph as the graph $G_\text{comp}(s, k) = G(s, k) \cup G(\bar{s}, k)$; 
the graph for multiple strings and the compacted graph is defined analogously.
To build the compacted comprehensive graph, we have to modify~\Cref{vertex_set_algorithm} so that $E$ represents
each \kmer and its reverse complement jointly. 
For example, this can be done by always storing the {\em normalized} form of a \kmer, 
which is the lexicographically smallest string between the \kmer and its reverse complement~\citep{chikhi2014representation}.
Similarly, we have to be careful when we make membership queries to $E$ in~\Cref{vertex_set_algorithm}, 
so that we are always querying normalized \kmers.

\section{Multiple Rounds: Dealing with Memory Restrictions}\label{sec:split}
While~\Cref{alg:twopass} significantly reduces the memory usage, 
it is still possible that the hash table in the second pass may not fit into the main memory, 
for some very large inputs.
To deal with this issue, we develop~\Cref{construct_algorithm}, which splits the input \kmers into $\ell$ parts and runs~\Cref{alg:twopass} in $\ell$ rounds.
Each round processes only one part, thus limiting its memory use to what is available. 
We note that we must partition the \kmers, which is distinctly different from partitioning the positions.
In particular, if two different positions have the same \kmer, they must belong to the same class;
hence, we cannot simply divide our strings into chunks. 
When $\ell=1$, \Cref{construct_algorithm} reduces to \Cref{alg:twopass} and does not limit its memory use,
but when $\ell$ is increased, the peak memory usage decreases at the expense of more rounds and hence longer running time. 

In each round, \Cref{construct_algorithm} will consider only approximately $1/\ell$ of the \kmers 
to check if they are junctions.
First, we partition the set of \kmers into $\ell$ classes (\Cref{line:partition}).
In round $i$, our algorithm begins by marking the positions whose \kmers are in class $i$ (\Cref{line:initc}).
Note that each position is considered in exactly one round.
We then call~\Cref{alg:twopass}, which unmarks those positions which are not junctions.
After all the rounds are complete, the junction vertices are exactly those that remain marked (\Cref{line:finalc}).

The maximum memory usage of~\Cref{construct_algorithm} is minimized when the partition created in~\Cref{line:partition} 
leads to an equally sized hash table in every round.
The hash table at round $i$ stores the set of \kpomers that contain a \kmer from partition $i$, which we denote $E_i(S,k)$.
Thus, we would like the sizes of $E_i(S,k)$ to be as equal as possible.
We are not concerned with obtaining an optimal partition, since a small discrepancy in the memory in each round is permissible.
We therefore develop the following heuristic.
Suppose that we have a hash function $f$ with range $[0, q)$, for some $q \gg \ell$.
We assign a counter $e_i$, for $i\in[0,q)$, to calculate an approximate value for $|E_i(S, k)|$, as follows.
We make a pass through the input and use a Bloom filter to store all the \kpomers.
Additionally, for every \kpomer, if it is already present in the Bloom filter, 
we increase the corresponding counters.
This way, we try to count only unique \kpomers, though the count can be slightly inflated by false positives.

Once we obtain the counters $e_i$, we amalgamate sets $E_i(S, k)$ into $\ell$ ones.
This problem is equivalent to the number partitioning problem, which is NP-hard \citep{garey1979computers}, 
so we use a greedy heuristic based on the linear scan of numbers $e_i$.
According to this heuristic, the first class $E_1(S, k)$ corresponds to the first $t$ subranges such that $\sum_{1 \leq i \leq t} e_i \leq \sum e_j / \ell $, and $t$ is as large as possible.
Other classes are determined analogously.

\begin{algorithm}[h]
\caption{\algname}
\label{construct_algorithm}
\algorithmicrequire{strings $S = \{s_1, \ldots, s_n\}$, integer $k$, integer $\ell$, integer $b$} \\ 
\algorithmicensure{the compacted de Bruijn graph $G_c(S, k)$ }
\begin{algorithmic}[1]
	\State $C_{\text{init}}\gets $ boolean array with every position unmarked
	\State Divide \kmers of $S$ into $\ell$ partitions. \label{line:partition}
	\For{$0 \leq i < \ell $}
		\State $C_i \gets $ mark every position of $C_{\text{init}}$ which belongs to partition $i$. \label{line:initc}
		\State $C'_i \gets \text{Filter-Junctions-Two-Pass}(S, k, b, C_i)$
\EndFor
\State $C_{\text{final}} = \bigcup C'_i$ \label{line:finalc}
\State \Return Graph implied by $C_{\text{final}}$, as described in~\Cref{sec:reduction}.
\end{algorithmic}
\end{algorithm}

\section{Parallelization Scheme}\label{sec:par}
We designed our algorithm so that it can be effectively parallelized on a multi-processor shared memory machine.
The bulk of the computation happens in Algorithm \ref{vertex_set_algorithm}, which consists of two parts.
Each part is a loop over all the positions in the input, \Crefrange{line:edge:start}{line:edge:end} in the first part and \Crefrange{line:check:start}{line:check:end} in the second.
The first loop is embarrassingly parallelizable as long as the data structure representing the set $E$ supports concurrent writes.
We use a lock-free Bloom filter when \Cref{vertex_set_algorithm} is called during the first pass of \Cref{alg:twopass}, and a concurrent hash table when it is called during the second pass.
The second loop is trivially parallelizable: threads will get non-overlapping portion of genomes, hence the synchronization on $C$ is not needed.
A synchronization barrier separates the two loops.
The compacted edge generation step that we discussed in the Section \ref{sec:reduction} is embarrassingly parallelizable as well.

We implement the parallelization using the standard single producer/multiple consumer pattern \citep{oaks2004java}.
According to this design pattern we create
(1) a single reader thread that splits the input into equal sized substrings and puts them into worker queues, and
(2) many worker threads that dequeue and process the substrings.
We utilized parallel programming primitives from the Intel's Threading Building Blocks library \citep{reinders2007intel}.
Note that this way we store only part of the input and the corresponding array $C$ in the input to save memory.

\section{Theoretical Analysis and Comparison} \label{sec:analysis}

In this section, we will analyze the running time and memory usage of our algorithm, and compare it with that of other algorithms.
Suppose that the de Bruijn graph $G(S, k)$ has $E$ edges, $J$ junctions and $L$ non-junctions that we call \textit{links}.
First, we will analyze the number of false positive junctions. 
A false positive junction is a link whose positions in $S$ are incorrectly left marked at the end of the first pass.
We assign an indicator variable $I_\ell$ to each link $\ell$, $I_\ell = 1$ if the link $\ell$ is
a false positive junction and $I_\ell = 0$ otherwise.
This way, the total number of false positive junctions is $FP = \sum_{1 \leq \ell \leq L} I_\ell$.
Let the probability that a link is a false positive junction be $p$.
By linearity of expectation we have $\E[FP] = \E[\sum_{1 \leq \ell \leq L} I_\ell] = L p$.
To calculate the probability $p$, note that each link has exactly one incoming and one outgoing true edge.
Hence, querying the Bloom filter in \Cref{line:check1} and \Cref{line:check2} of the \Cref{vertex_set_algorithm} may discover at most six false edges: three incoming and three outgoing ones.
At least one false positive from those six queries results in the link misclassified as a junction.
\citet{mitzenmacher2005probability} show that the probability of a single false 
positive resulting from querying a Bloom filter is 
$q = (1 - e ^ {-h E / b})^h$.
where $h$ is the number of hash functions used by the Bloom filter and $b$ is the number of bits in the filter.
Assuming that queries are independent, $p = 1 - (1 - q)^6 = 1 - (1 - (1 - e ^ {h E / b})^h)^6$.

Now we will analyze the running time.
Let $m$ be the total length of the input strings.
First, note that storing and querying \kmers with the Bloom filter requires calculation of $h$ hash values for each operation.
We use a family of sliding window hash functions, so both filling and querying the Bloom filter in the first pass takes $O(m h)$ operations.
In the second pass the algorithm employs a hash table to store and query \kpomers.
Denote by $M$ the number of marks left in the array $C$ after the first pass.
The expected running time is then $O(m h + M k)$, since each hash table operation takes $k$ time and 
there are $O(M)$ operations total.
To calculate $M$, let's assume that the average number of times a false positive junction occurs in the input string is given by $r$. 
Then, the expected value of $M$ is $|G_c| + L p r$, 'where $|G_c|$ is the number of edges in the compacted de Bruijn multi-graph.
The expected running time is then $O(m h + (|G_c| + L p r) k)$

To calculate the memory usage, note that the first pass allocates $b$ bits of memory for the Bloom filter and the second pass uses a hash table that contains at most $8(J + FP)$ elements.
Hence, the expected memory usage is $O(\max[b, (J + L  p) k])$.
The array $C$ of marks is accessed sequentially by the algorithm and can be stored in the external memory without loss of performance.
As discussed in \Cref{sec:par}, at each moment the memory contains only a constant amount of characters of the input strings, so the input length does not contribute to the asymptotic bound.

\begin{table*}[t]
	\caption{
		\small
		Running times and memory consumption of different algorithms for constructing 
		the de Bruijn graph from multiple complete genomes.
		For SplitMEM $g$ stands for the size of the largest genome in the input.
		An explanation of other variables is given in the \Cref{sec:analysis}.
		\label{table:analysis_comparison}
	}
	\small
	\begin{center}
		\setlength\tabcolsep{5pt} 
		\begin{tabular}{|l|r|r|}\hline
			Algorithm & Running Time & Memory \\ \hline 
			Sibelia~\citep{minkin2013sibelia} & $O(m)$ & $O(m)$  \\
			SplitMEM~\citep{marcus2014splitmem} & $O(m \log g)$ & $O(m + |G_c|)$ \\
			{\tt bwt-based} from \citet{baier15} & $O(m)$ & $O(m)$ \\
			\algname &$O(m h + (|G_c| + L p r) k)$ & $O(\max[b, (J + L  p) k])$ \\ \hline
		\end{tabular}
	\end{center}
\end{table*}

\Cref{table:analysis_comparison} contains asymptotic upper bounds on memory usage and running times of different algorithms for constructing the compressed de Bruijn graph from multiple complete genomes.
The performance of \algname depends highly on the number of junctions present.
On practical instances of related genomes datasets, there is a lot of shared sequence and the number of junctions is low.
Unlike other algorithms, our expected memory usage depends only on the structure of the input, but not directly on its size.
At the same time, dependence on $k$ makes \algname less applicable in case of very large $k$.
\end{methods}
\section{Results}
To evaluate the performance of \algname, we conducted several experiments.
We compared its running time and memory footprint with other available implementations of de Bruijn graph compaction algorithms.
We then ran \algname on a real dataset of biological interest as well as a large dataset of simulated data.
We assessed the parallel scalability of our implementation and capabilities of running the algorithm on machines with limited memory using the round splitting procedure
Finally, we evaluated the effects of input length and structure on the running time and memory usage.

First, we benchmarked \algname against Sibelia \citep{minkin2013sibelia}, SplitMEM \citep{marcus2014splitmem} and 
the {\tt bwt-based} algorithm of \citet{baier15}, using default parameters.
As far as we understood, the algorithm in~\citet{beller2015efficient} was subsumed by~\citet{baier15}.
There were two important caveats.
First, in most genomics application, it is necessary to account for both strands in the de Bruijn graph.
To make SplitMem and \bwtbased work with both strands,
we appended the reverse complements of the sequences to the input, as suggested by their authors.
In our results, we show SplitMEM and the \bwtbased in two versions: (1) considering only one strand, and
(2) considering both strands.
Second, Sibelia not only constructs the compacted graph but also modifies it after construction.
We therefore ran Sibelia only in the construction mode (contrary to the bechmarks in~\citet{marcus2014splitmem}).

For benchmarking purposes, we used a dataset of 62 \ecoli genomes (310 Mbp) from~\citet{marcus2014splitmem}.
We also used a dataset with seven human genomes ($\sim$21 Gbp) used by~\citet{baier15}, 
which includes five different assemblies of the human reference genome and two paternal haplotypes of NA12878 (see~\citet{baier15} for more details).
We ran our experiments on the highest memory Amazon EC2 instance (r3.8xlarge):
a server with Intel Xeon E5-2670 processors and 244 GB of RAM.
We set the default number of internal hash functions in the Bloom filters to four.
We also verified the correctness of \algname by comparing its output to that of a naive compaction algorithm on feasible test cases.
A direct comparison to the output of other tools is impractical since each algorithm handles edges cases differently (e.g. the presence of undetermined nucleotides (Ns) in the input).

The results are shown in Table \ref{table:exp_results}.
For seven human genomes, \algname was 12 -- 14 times faster than \bwtbased with a single strand, when we used 15 threads.
When only a single thread was used, \algname was 1.8 -- 2.0 times faster.
When \bwtbased was run with both strands, our improvements were approximately doubled.

\begin{figure}[H]
\centering
\begin{minipage}{.45\textwidth}
\begin{tikzpicture}
\begin{axis}
[
	axis equal,
	title=Parallel scalability,
	xlabel=Number of worker threads,
	ylabel=Speedup (times),
	title=Parallel scalability,
	xlabel=Number of worker threads,
	ylabel=Speedup (times),
	legend entries={First pass, Second pass, Edge construction},
	legend style={at={(0.3,0.7)},anchor=south},
	width=1.\textwidth
]
	\addplot table [x=NTHREADS, y=ROUND1_ACC, col sep=comma] {results_scalability.csv};
	\addplot table [x=NTHREADS, y=ROUND2_ACC, col sep=comma] {results_scalability.csv};
	\addplot table [x=NTHREADS, y=EDGES_ACC, col sep=comma] {results_scalability.csv};
\end{axis}
\end{tikzpicture}
\end{minipage}
\caption{\small Parallel speedup of the different parts of \algname.  Edge constructions refers to the conversion of junction positions to the compacted graph, as described in~\Cref{sec:reduction}.  
The Bloom filter was $8.58$ GB and used eight internal hash functions. We set $k = 25$.}
\label{plot:scale}
\end{figure}
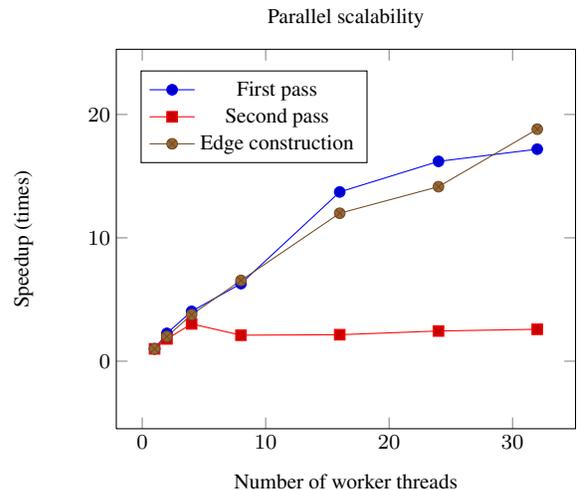

\begin{table*}[t]
	\caption{
		\small
		Benchmarking comparisons. 
		Each cell shows the running time in minutes and the memory usage in parenthesis in gigabytes.
		\algname was run using just one round, with a Bloom filter size $b=0.13$ GB for \ecoli and $4.3$ GB for human with $k =25$ and $b = 8.6$ GB with $k = 100$.
		A dash in the SplitMem column indicates that it ran out of memory,
		while a dash in the Sibelia column indicates that it could not be run on such large inputs.
		\label{table:exp_results}
	}
	\small
	\begin{center}
		\bgroup \def\arraystretch{1.3} 
		\setlength\tabcolsep{5pt} 

		\begin{tabular}{|l|r|r|r|r|r|r|r|}\hline
			& Sibelia~\citep{minkin2013sibelia} & 
			SplitMem~\citep{marcus2014splitmem} & 
			\multicolumn{2}{|c|}{\bwtbased from \citet{baier15}} & 
			\multicolumn{2}{|c|}{\algname} \\
			\hline
			& & single strand & single strand & both strands & 1 thread & 15 threads \\
			\hline 

			\ecoli ($k=25$) & 10 (12.2) & 70 (178.0) &  8 (0.85) & 12 (1.7) & 4 (0.16) & 2 (0.39) \\
			\ecoli ($k=100$) & 8 (7.6) & 67 (178.0) &  8 (0.50) & 12 (1.0) & 4 (0.19) & 2 (0.39) \\
			7 humans ($k=25$) & - & - &  867 (100.30) & 1605 (209.88) & 436 (4.40) & 63 (4.84) \\
			7 humans ($k=100$) & - & - & 807 (46.02) & 1080 (92.26) & 317 (8.42)  & 57 (8.75) \\
			\hline
		\end{tabular}

		\egroup
	\end{center}
\end{table*}
\begin{table}
	\caption{\small Results of running \algname on large datasets.
		Each cell shows the running time in minutes and the memory usage in parenthesis in gigabytes.
		\algname was run with 15 threads and the Bloom filter size was $b=34$ GB for primates and 
		$b=69$ GB for humans.
		An empty cell indicates we did not perform the experiment.
		\label{table:large_exp_results}
	}
	\small
	\begin{center}
		\bgroup \def\arraystretch{1.3} 
		\setlength\tabcolsep{5pt} 

		\begin{tabular}{|l|r|r|r|}\hline
			Dataset & 1 thread & 15 threads \\
			\hline 
			8 primates ($k=25$) & 914 (34.36) & 111 (34.36) \\
			8 primates ($k=100$) & 756 (56.06) & 101 (61.68) \\
			(43+7) humans ($k=25$) &   & 705 (69.77) \\
			(43+7) humans ($k=100$) &  & 927 (70.21) \\
			(93+7) humans ($k=25$) &  & 1383 (77.42) \\
			\hline
		\end{tabular}
		\egroup
	\end{center}
\end{table}

\begin{table}
	\caption{\small Number of marks the array $C$: initially, after the first pass, and after the second pass of \Cref{alg:twopass}. \label{table:marks_results}}
	\small
	\begin{center}		
		\begin{tabular}{|l|r|r|r|}\hline
			Dataset & Total Positions & First Pass & Second Pass \\ \hline
			\ecoli ($k=25$) &  310,157,564 &  24,649,489 & 24,572,562 \\
			\ecoli ($k=100$) &  310,157,489 & 22,848,018 & 9,492,091 \\ 	
			7 humans ($k=25$) & 21,201,290,922  & 3,489,946,013 & 2,974,098,154\\
			7 humans ($k=100)$ &  21,201,290,847 & 1,374,287,870 & 188,224,214 \\
			8 primates ($k=25$) & 24,540,556,921 &  5,423,003,377 & 5,401,587,503 \\ 
			8 primates ($k=100$) & 24,540,556,846  & 1,174,160,336 & 502,441,107 \\
			\hline
		\end{tabular}
	\end{center}
\end{table}


We also assessed \algname's ability to handle (1) large numbers of long closely-related genomes, and (2) more divergent genomes.
To do so, we generated 93 human genomes using the FIGG genome simulator \citep{killcoyne2014figg} and ``{\tt normal}'' simulation parameters.
The FIGG genome simulator generates complete sequences based on a reference genome and variations' frequencies extracted from the datasets from projects like \cite{10002010map} and \cite{gibbs2003international}.
The mutations comprise single-nucleotide alterations as well as indels and structural variations of larger size.
We ran \algname on three datasets: (1) 43 simulated genomes plus the seven used in \Cref{table:exp_results}, (2) 93 simulated human genomes plus the seven, and (3) eight primate genomes from the UCSC genome browser:
gibbon, gorilla, orangutan, rhesus, baboon, chimp, bonobo, and human.
We also tried to run other tools on the datasets above, but they ran out of memory.
The results are shown in \Cref{table:large_exp_results}.
We construct the graph for 100 human genomes in 23 hours using 77 GB of RAM and 15 threads.
For eight primates, we used under two hours and 34-62 GB of RAM on 15 threads.

For the benchmarks and real datasets in the experiments above, 
we recorded the number of marks that the \Cref{alg:twopass} left in the array $C$ after each stage
(\Cref{table:marks_results}).
We did not record those numbers for the larger datasets due to 
the associated cost restrictions of re-running the larger experiments.


To measure the parallel scalability of \algname, 
we fixed a dataset consisting of five simulated human genomes.
\Cref{plot:scale} shows scaling results for 1-32 worker threads.
The first pass of \Cref{alg:twopass}, and the conversion of junction vertices to the graph 
(as described in~\Cref{sec:reduction}), 
scale almost linearly up to 16 threads.
The second pass does not scale past four worker threads,
due to what we believe is the limited parallel performance of the 
concurrent hash table, which we plan to improve in the future.

Next, we evaluated the performance of \algname under memory restrictions.
For each run, we set a different memory threshold and checked how many rounds 
were necessary so that \algname did not exceed the threshold
(\Cref{table:split}).
This experiment illustrates that \algname is capable of constructing the compacted graph for a dataset of five human genomes under memory restrictions
commensurate with a low-end laptop.

Our last experiment assessed the effects of the input size and structure (number of junctions and number of distinct \kmers) 
on running time and memory consumption (\Cref{plot:structure}).
As expected from the theoretical analysis, the running time depends both on the input size and structure, while memory consumption depends only on structure.
For example, consider the dataset from~\citet{baier15}, 
which has highly similar genomes.
As a result, the number of distinct \kmers and junctions is nearly constant even as the number of genomes increases.
This dataset has the lowest running time, and the amount of memory \algname uses does not increase with the number of genomes.
Unlike the memory usage, the running time does see a dominant effect of the input size, as the running time increases with the number of genomes for this dataset.
On the other hand, consider the primates dataset, which is more variable and contains more distinct \kmers and junctions than the simulated human dataset. 
As a result, \algname takes a longer time and has larger memory consumption.

\begin{table*}[t]
	\caption{\small The minimal number of rounds it takes for \algname to compress the graph without exceeding a given memory threshold.
In this experiment we used five simulated human genomes.
Memory quantities are in gigabytes and running times are in minutes.
It was carried out on a machine with a Intel Xeon E7-8837 processor.
We used $k = 25$ and ran the computation with eight worker threads.
In each run we used the largest possible Bloom filter size that fitted a given restriction (in our implementation the number of bits it has to be a power of two).
}
	\label{table:split}
	\small
	\begin{center}		
		\begin{tabular}{|l|r|r|r|r|}\hline
			Memory threshold & Used memory & Bloom filter size & Running time & Rounds \\ \hline
			10 & 8.62 &8.59 & 259 & 1 \\
			8 & 6.73 &4.29 & 434 & 3\\
			6 & 5.98 &4.29 & 539 & 4\\
			4 & 3.51 &2.14 & 665 & 6\\
			\hline
		\end{tabular}
	\end{center}
\end{table*}

\begin{figure*}[t!]
	\begin{subfigure}{.4\textwidth}
		\begin{tikzpicture}
		\begin{axis}
		[
			title=Running time,
			xlabel=Number of genomes,
			ylabel=Running time (minutes),
			legend entries={Simulated humans, Primates, Human assemblies},
			legend style={at={(0.3,0.7)},anchor=south},
		]
		\addplot table [x=N, y=adjruntime, col sep=comma] {results_k25_hum_scale.csv};
		\addplot table [x=N, y=adjruntime, col sep=comma] {results_k25_prim_scale.csv};
		\addplot table [x=N, y=adjruntime, col sep=comma] {results_k25_refhum_scale.csv};
		\end{axis}
		\end{tikzpicture}
		\caption{}
		\label{plot:structure:runtime}
	\end{subfigure}
	\hfill
	\begin{subfigure}{.4\textwidth}
		\begin{tikzpicture}
		\begin{axis}
		[
			title=Maximum memory consumption,
			xlabel=Number of genomes,
			ylabel=Memory (GBs),
			legend entries={Simulated humans, Primates, Human assemblies},
			legend style={at={(0.3,0.7)},anchor=south},
		]
		\addplot table [x=N, y=adjmemory, col sep=comma] {results_k25_hum_scale.csv};
		\addplot table [x=N, y=adjmemory, col sep=comma] {results_k25_prim_scale.csv};
		\addplot table [x=N, y=adjmemory, col sep=comma] {results_k25_refhum_scale.csv};
		\end{axis}
		\end{tikzpicture}
		\caption{}
		\label{plot:structure:memory}
	\end{subfigure}	
	\begin{subfigure}{.4\textwidth}
		\begin{tikzpicture}
		\begin{axis}
		[
			title=Number of distinct \kmers in different datasets,
			xlabel=Number of genomes,
			ylabel=Number of \kmers,
			legend entries={Simulated humans, Primates, Human assemblies},
			legend style={at={(0.3,0.7)},anchor=south},
		]
		\addplot table [x=N, y=kmers, col sep=comma] {results_k25_hum_scale.csv};
		\addplot table [x=N, y=kmers, col sep=comma] {results_k25_prim_scale.csv};
		\addplot table [x=N, y=kmers, col sep=comma] {results_k25_refhum_scale.csv};
		\end{axis}
		\end{tikzpicture}
		\caption{}
		\label{plot:structure:kmers}
	\end{subfigure}
	\hfill
	\begin{subfigure}{.4\textwidth}
		\begin{tikzpicture}
		\begin{axis}
		[
			title=Number of junctions in different datasets,
			xlabel=Number of genomes,
			ylabel=Number of junctions,
			legend entries={Simulated humans, Primates, Human assemblies},
			legend style={at={(0.3,0.7)},anchor=south},
		]
		\addplot table [x=N, y=junctions, col sep=comma] {results_k25_hum_scale.csv};
		\addplot table [x=N, y=junctions, col sep=comma] {results_k25_prim_scale.csv};
		\addplot table [x=N, y=junctions, col sep=comma] {results_k25_refhum_scale.csv};
		\end{axis}
		\end{tikzpicture}
		\caption{}
		\label{plot:structure:junctions}
	\end{subfigure}	
	\caption{\small Effects of the input length and structure on the memory and running time.
Here we varied the number of input genomes from one to seven and recorded the running time (\subref{plot:structure:runtime}) and memory usage (\subref{plot:structure:memory}).
We also calculated the number of distinct \kmers (\subref{plot:structure:kmers}) and junctions (\subref{plot:structure:junctions}) in the input to illustrate their effect on the algorithm's performance.
We used three datasets: simulated humans, primates, and 7 human assemblies from~\citet{baier15}.
The experiment was performed on a machine with a Intel Xeon E7-8837 processor.
We used $k = 25$ and ran the computation with eight worker threads and a single round.
For each run we used the optimal Bloom filter size, i.e. the filter size that minimizes the maximum memory consumption.
The number of distinct \kmers was computed using the KMC2 \kmer counter \cite{deorowicz2015kmc}.
In our implementation, the number of bits in the Bloom filter has to be a power of two, which leads to the non-smooth growth of the memory curve
in (\subref{plot:structure:memory}).
\label{plot:structure}}
\end{figure*}

\section{Conclusion}
In this paper we gave an efficient algorithm for constructing the compacted de Bruijn graph for a collection of 
complete genomic sequences.
It is based on identifying the positions of the genome which correspond to vertices of the compacted graph.
\algname works by narrowing down the set of candidates using a probabilistic data structure,
in order to make the deterministic memory-intensive approach feasible.
We note that the effectiveness of the algorithm relies on having whole genome sequences,
making it inapplicable to the case when genomes are represented as shorts read fragments.
Parallel speedup of the second pass of~\Cref{alg:twopass} is an important direction of the future work that we are going to pursue.

A critical parameter of the \algname is the size of the Bloom filter ($b$).
We recommend the user to set $b$ to be the maximum memory they wish to allocate to the algorithm. 
If the memory usage then exceeds $b$ (which would happen due to the size of the hash table), 
then the number of rounds should be increased until the memory usage falls below $b$.
In future work, we plan to implement an algorithm to automatically select a value of $b$ 
that would minimize the maximum memory used by the algorithm.
We also plan to automate the choice of the number of rounds, given a desired memory limit.

The algorithm can also be used to construct a partially compacted graph by omitting the second pass of~\Cref{alg:twopass}.
A partially compacted graph is one where some, but not necessarily all, of the non-branching paths have been compacted.
Partially compacted graphs are faster to construct and can be useful in applications 
when the size of the graph is not critical or full compaction takes too much resources.

\algname makes significant progress in 
extending the number and size of genomes from which a compacted de Bruijn graph can be constructed.
We believe that this progress will enable novel biological analyses of mammalian-sized genomes.
For example, de Bruijn graphs can now be applied to construct synteny blocks for closely related mammalian species, similar to how they were applied to bacterial genomes \citep{minkin2013sibelia,pham2010drimm}.
\algname can also be useful in other applications, 
such as the representation of multiple reference genomes 
or variants between genomes.

\section*{Acknowledgements}

We would like to thank Daniel Lemire for modifying his 
hash function library~\citep{lemire2010recursive} for the purpose of our algorithm.

\section*{Funding}
This work has been supported in part by NSF awards DBI-1356529, CCF-1439057, IIS-1453527, and IIS-1421908 to PM.
\vspace*{-12pt}

%
%

\bibliographystyle{natbib}
\bibliography{bibliography}
\end{document}